\documentclass[10pt,leqno]{amsart}
\usepackage{graphicx}
\usepackage{indentfirst,csquotes}
\usepackage{algorithm}
\usepackage{algorithmic}

\usepackage{amssymb,amsthm,amsmath}
\usepackage{xcolor,paralist,hyperref,titlesec,fancyhdr,etoolbox}
\usepackage{xfrac}
\usepackage{tikz}
\usepackage{orcidlink}
\hypersetup{colorlinks,linkcolor={blue},citecolor={blue},urlcolor={red}}  

\newtheorem{definition}{Definition}[section]
\newtheorem{example}{Example}[section]

\newtheorem{proposition}{Proposition}[section]

\newtheorem{problem}{Problem}[section]

\titleformat{\section}[hang]
  {\normalfont\large\bfseries\centering}
  {\thesection}
  {1em}
  {}

\hypersetup{ colorlinks=true, linkcolor=black, filecolor=black, urlcolor=black }

\usepackage{lipsum}

\begin{document}
\title{A numerical approach to the co-design of PID controllers and low-pass filters for time-delay systems} %%%%%%%%%%%% %%% Title should mention the codesign of filters and PID controllers
\author[Torres-García]{Diego Torres-García \orcidlink{0000-0003-3696-3450 }}\address{U2IS, ENSTA, Institut Polytechnique de Paris, 828 Boulevard des Maréchaux, 91120 Palaiseau, France.}
\author[Michiels]{Wim Michiels \orcidlink{0000-0002-0877-0080}}\address{Numerical Analysis and Applied Mathematics Center, Department of Computer Science, KU Leuven, Celestijnenlaan 200A, 3001 Heverlee, Belgium}
\date{\today}
\thanks{This work was supported by the project C14/22/092 of the Internal Funds KU Leuven}
\email{juan-diego.torres@ensta.fr, wim.michiels@kuleuven.be}
\maketitle
\let\thefootnote\relax
\footnotetext{MSC2020:  93C23, 93B52, 93D23.} %%%%%%%%%%
\keywords{time-delay, low-pass filters, PID control, linear systems}
\begin{abstract}
    This paper addresses the numerical optimization of proportional–integral–derivative (PID) controllers for linear time-invariant systems with delays, where the derivative action is implemented using a low-pass filter. While performance assessment is often based on the spectral abscissa of the ideal PID-controlled system, the inclusion of a derivative filter fundamentally alters the closed-loop spectral properties and cannot be treated as a post-processing step. In particular, the spectral abscissa of the filtered closed-loop system may differ significantly from that of its unfiltered counterpart, potentially affecting both stability and performance.
    We propose a systematic numerical design framework in which the PID gains and the filter constant are optimized simultaneously by directly minimizing the spectral abscissa of the filtered closed-loop system. Treating the filter as an integral part of the control design allows us to reconcile robustness at high frequencies, in the sense of mitigating fragility issues due to approximate identities, with performance at low frequencies, in addition to counter measurement noise amplification.
    At the end of the presentation, numerical examples illustrate the proposed approach and highlight the benefits of controller–filter co-design. The results apply to general linear systems with input and/or state delays and are valid for both single-input single-output (SISO) and multi-input multi-output (MIMO) configurations.
\end{abstract} %%%%%%%%%

\section{Introduction}

    Proportional--integral--derivative (PID) controllers remain one of the most widely used feedback mechanisms in industrial and engineering applications because of their simplicity, transparency, and effectiveness across a broad range of processes \cite{Aastrom2021feedback, Ribeiro2020PID, ODwy2009PID}. In spite of their apparent simplicity, the rigorous analysis and design of PID controllers for systems with delays pose significant theoretical and practical challenges. Delay elements introduce an infinite-dimensional nature to the closed-loop system, thereby affecting stability, performance, and robustness in ways that cannot be fully captured by classical finite-dimensional design methods \cite{Niculescu2001, MichielsNiculescu2014}.
    
    A common practice in industrial implementations is to accompany the derivative action of the PID controller with a first-order low-pass filter. This modification is motivated by the need to reduce sensitivity to high-frequency noise, but it also has nontrivial consequences on the closed-loop dynamics. In particular, while the spectral abscissa associated with the ideal PID design is often adopted as a measure of stability and convergence speed, the inclusion of the filter alters the spectral distribution of the closed-loop characteristic roots. As a result, the spectral abscissa of the systems with filtered derivative action may deviate substantially from the \emph{nominal} one, leading to discrepancies between predicted and actual closed-loop behavior \cite{michiels2022filter}.
    
    Recent advances have emphasized the importance of explicitly accounting for delay effects in controller design. Spectral methods and quasi-polynomial root-location techniques have been developed for stability and performance optimization in delay differential equations \cite{Michiels2025relations}. In parallel, optimization-based approaches have been successfully applied to the tuning of PID controllers for delayed systems, often by minimizing the dominant characteristic root or by increasing robustness in the $H_{\infty}$ or $H_2$ norm sense \cite{Ramirez2015design, Torres2024stabilization, Michiels2023complete, Gomez2018optimization, Mhmood2025hinf}. However, most existing methods neglect the impact of the derivative filter on the spectral properties of the closed loop, implicitly assuming the idealized derivative action.
    
    The main idea of this work is that this discrepancy can be turned into an advantage if properly accounted for during the design phase. By explicitly integrating the \emph{filter} design with the optimization of the PID parameters, one can achieve improved stability properties while simultaneously guaranteeing other robustness constrains, such as a certain delay margin. This approach highlights that the filter is not merely a tool to avoid fragility of stability due to high-frequency perturbations (e.g., induced by so-called approximate identities neglected in the modeling phase \cite{Georgiou1989w}) when implementing the derivative action, but also a fundamental component that contributes to shaping nominal performance and robustness.
    
    The proposed framework applies to general linear time-invariant systems with input and/or state delays, and accommodates both single-input single-output (SISO) and multi-input multi-output (MIMO) configurations. In contrast to existing methods that optimize an unfiltered model and later add a filter heuristically, we directly optimize the \emph{filtered} spectral abscissa, including the filter constant as part of the optimization variables. This results in controllers that simultaneously reconcile performance, robustness, and realizability.
    
    The remainder of this paper is organized as follows. Section 2 introduces some of the key mathematical notions related to linear time-delay systems theory. Section 3 presents some of the classical problems that motivate this study. Section 4 formulates the problem statement, and the main results are presented in Section 5. Numerical examples complete this note in Section 6, and some conclusions and remarks are discussed in the last Section.

\section{Mathematical background}

%%%%%%%%%%%% LINEAR ONLY VERSION 

    Time-delay systems form a class of dynamical systems whose evolution depends not only on the current state but also on past values of the state. In this work, we consider linear time-invariant systems with possible multiple discrete constant delays in either the states or input/output channels, which can be represented in the general descriptor form:
    \begin{equation}
        \left\{ 
        \begin{array}{l}
        E \dot{x}(t) = A_0 x(t) +\sum_{j=1}^{N} A_j x\bigl(t-\tau_j\bigr) + B u(t), \\[10pt]
        y(t) = Cx(t)
        \end{array}
        \right.
    \label{eq:DDE_descriptor}
    \end{equation}
    where $x(t)\in\mathbb{R}^n$ is the state vector, $u(t)\in\mathbb{R}^m$ is the control input, $y(t) \in \mathbb{R}^m$ is the system's output at time $t$, $E,A_0,A_j\in\mathbb{R}^{n\times n}$, $B\in\mathbb{R}^{n\times m}$ and $C \in \mathbb{R}^{m\times n}$ are constant matrices, and $\tau_j \geq 0$, denote discrete, constant delays
    %, with $\tau_0$ representing an input/output channel delay
    . The matrix $E$ may be singular, allowing the representation of neutral-type systems through an equivalent descriptor formulation. Furthermore, let the columns of $U$ and $V$ define a basis for the left and wight null space of matrix $E$, respectively. Then, we assume in the rest of this paper that $U^TA_0 V$ is invertible and that either $U^T B = 0$ or $CV = 0$. These assumptions imply that \eqref{eq:DDE_descriptor} is \emph{semi-explicit}, and that there is no hidden feed-trough from $u$ to $y$.
    
    In contrast with ordinary differential equations, specifying an initial condition $x(t_0)$ is not sufficient to determine a unique solution of \eqref{eq:DDE_descriptor}. Instead, an initial function must be prescribed over the maximal delay interval
    $[-\tau_{\max},0]$, where $\tau_{\max} := \max\{\tau_1,\ldots,\tau_N\}$. More precisely, for a continuous input function, a forward solution of \eqref{eq:DDE_descriptor} is uniquely defined at $t_0$ for any initial condition $\phi \in \mathcal{X}_{t_0}$, where:
    \begin{equation}
        \mathcal{X}_{t_0} := \left\{ \phi \in \mathcal{C}\bigl([-\tau_{\max},0];\mathbb{R}^n\bigr): \quad U^T A_0 \phi(0) + \sum_{i=1}^N U^T A_i \phi(-\tau_i) + U^T Bu(t_0) = 0\right\}.
        \label{eq:consistency}
    \end{equation}
    Note that such structure guarantees that the algebraic or delay-difference equation is satisfied at $t = 0$, which guarantees continuity of the solution. 
    
    Accordingly, the state of the system at time $t$ can be identified with the function $x_t \in \mathcal{X}_t$ defined by:
    \[
        x_t : [-\tau_{\max},0] \to \mathbb{R}^n, \qquad x_t(\theta) := x(t+\theta).
    \]
    
    As a consequence of this functional formulation, time-delay systems form a class of infinite-dimensional systems. Although the state the trajectory $t \mapsto x(t)$ evolves in $\mathbb{R}^n$, the system state at each time instant is the function $x_t$. As a consequence, the stability properties of~\eqref{eq:DDE_descriptor} are governed by an infinite spectrum, and classical finite-dimensional stability tools are no longer directly applicable. Instead, spectral and semigroup-based techniques are required for stability analysis \cite{Breda2014stability,Sipahi2011stability}.

    One of the major challenges arising from the infinite-dimensional nature of time-delay systems lies in the computation and characterization of their spectrum. For linear time-invariant systems with delays, the characteristic equation takes the form of a \emph{quasi-polynomial}, whose set of roots determines the asymptotic behavior of the system. Unlike the finite-dimensional case, where the spectrum consists of a finite number of characteristic roots , the spectrum of a delay system contains \emph{infinitely many} characteristic roots. Consequently, the direct computation or assignment of all characteristic roots  is impossible. In the retarded case, that is, delays only appear on the state and not on its derivative, these characteristic roots  accumulate towards $-\infty$ in the complex plane. 
    
    In practice, one is primarily interested in the dominant part of the spectrum, namely the characteristic roots  with the largest real parts, since they dictate the decay rate and stability margins of the system. Several numerical tools have been developed to approximate the dominant characteristic roots within a prescribed region of the complex plane. Dedicated software packages, such as \textsc{TDS-Control} and \textsc{DDE-Biftool} \cite{Appeltans2023analysis,Engelborghs2002numerical}, provide reliable tools for this task. These packages allow for an accurate localization of the rightmost part of the spectrum, thereby offering a solid basis for the stability and performance analysis of delay systems.
    In this work, we employ the recent Python package \emph{tdcpy: Control of Time-Delay Systems in Python} \cite{tdcpy} to compute the characteristic roots throughout our study.
    
    Because the spectrum cannot be directly assigned through pole-placement techniques, as in the finite-dimensional case, more sophisticated design strategies are required to achieve desired closed-loop behavior. 
    A common performance-oriented objective is to \emph{maximize the decay rate} of the closed-loop solution, which is equivalent to placing the \emph{spectral abscissa}, the supremum of the real parts of the characteristic roots, as far to the left as possible. The spectral abscissa is defined as
    
    \begin{equation}
        \rho = \sup\{\,\Re(s) : \det(M(s)) = 0\,\},
        \label{eq:spectralAbscissa}
    \end{equation}
    where 
    \[
        M(s) = sE - A_0 - \sum_{j=1}^N A_j e^{-s \tau_j}.
    \]
    
    However, such optimization problems are highly nontrivial. Analytical results are available only for very low-order systems (typically of order two or less), while for higher-dimensional or multi-delay systems the results are fully numerically oriented. 
    Nevertheless, numerical optimization techniques based on spectral sensitivity analysis and continuation algorithms have proven effective in improving closed-loop performance and stability margins. 
    These approaches serve as the foundation for the filtered PID optimization framework developed in the present work.

\section{Effect of approximate identities on stability}
In the design of implementable PID controllers, a major challenge lies in the implementation of the derivative action. In practice, the derivative of the measured signal is rarely available directly, either due to sensor limitations or measurement noise. Consequently, it must be approximated, typically by introducing a low-pass filter or by employing a finite-difference scheme. From a frequency-domain viewpoint, such approximations introduce families of transfer matrices that converge to the identity as the approximation parameter tends to zero. These approximations, to which we may refer here as approximate identities, are often neglected, however, their impact on the stability properties of time-delay systems can be significant.

Indeed, when the open-loop system has relative degree one, its feedback interconnection with a PID controller is proper but not strictly proper. As a consequence, even if the resulting closed-loop system is exponentially stable, its stability may fail to be robust with respect to the \emph{infinitesimal} perturbations on parameters, induced by approximate identities. Note that this phenomenon led to the notion of $w$-stability in \cite{Georgiou1989w}, which guarantees a non-fragile feedback interconnection of plant and controller.

To formalize these ideas, we introduce the notion of an approximate identity, which provides a convenient framework to study families of operators that converge to the identity in the $H_\infty$ sense. This concept will allow us to analyze the limiting behavior of filtered control laws, delay difference approximations, and feedback delays, to characterize when stability properties are preserved, or lost.

\begin{definition}[Approximate identity \cite{Georgiou1989w}]
\label{def:identity_Hinf}
    Let $p\in\mathbb N$, and let
    \[
        L:\mathbb{C}\times\mathbb R_{+}\longrightarrow\mathbb C^{p\times p},
    \]
    we say that \(L\) is an approximate identity if:
    \[
        \lim_{r\to 0}\big\|L(\cdot\,;r)\big\|_{H_\infty} \;=\; 1.
    \]
    Furthermore, for any compact set $\Omega\subset \mathbb{C}$ and for any $\varepsilon>0$, there exists $\delta>0$ such that $0<r<\delta$ implies:
    \[
        \sup_{s\in\Omega}\big\|L(s;r)-I_p\big\|_2 < \varepsilon.
    \]
\end{definition}

% \begin{definition}[Approximate identity]
% A family $\{I_r\}_{r>0} \subset H_\infty^{p\times p}$ 
% is called an approximate identity if:

% \begin{enumerate}
% \item $\|I_r\|_{H_\infty} = 1$ for all $r>0$,

% \item for every compact set 
% $\Omega \subset \{ s\in\mathbb C : \Re s > 0 \}$,
% \[
% \sup_{s\in\Omega} \| I_r(s) - I_p \|_2
% \longrightarrow 0
% \quad \text{as } r \to 0.
% \]
% \end{enumerate}
% \end{definition}

Note that, in particular, the following correspond to approximate identities:
\begin{enumerate}
    \item $L_1(s; r) := e^{-s r}$
    \item $L_2(s; r) := \frac{1-e^{-s r}}{s r}$
    \item $L_3(s; r) := \frac{1}{s r + 1}$
\end{enumerate}
In the following, we aim to illustrate the effect on stability that an approximate identity may have, in a similar spirit to \cite{appeltans2022analysis, Mendez2024delay}.

\begin{example}[Small feedback delay]
We start by considering the following single-input single-output linear time-invariant system:
\begin{equation}
    \Sigma := \left\{ \begin{array}{l}
         \dot{x}(t) = \left[\begin{matrix}
             0 & 1\\ -3 & -4
         \end{matrix} \right]x(t) + \left[\begin{matrix}
             0\\1
         \end{matrix} \right] u(t)  \\[10pt]
          y(t) = [\begin{matrix}
              1 & -1
          \end{matrix}] x(t)
    \end{array} \right., \label{eq:sysEx1}
\end{equation}
with $u(t) = -k_py(t)-k_d\dot{y}(t)$ a classical PD controller. The transfer function associated to the system reads:
\begin{equation}
    H_{yu}(s) := C(sI-A)^{-1}B =  \frac{s-1}{(s+1)(s+3)},
\end{equation}
which leads to a closed-loop characteristic function $\Delta:\mathbb{C} \rightarrow \mathbb{C}$ given by:
\begin{equation}
    \Delta(s) = (1+k_d)s^2+(4+k_p-k_d)s+(3-k_p).
\end{equation}
It is well-known that for a linear system with no delays to be exponentially stable  all characteristic roots must be located on the open left-half plane $\mathbb{C}_-$. For a second order polynomial, it is also well-known that a sufficient and necessary condition for this is to have the same sign in all its coefficients, thus for this system to be exponentially stable we require $k_d > -1$, $k_p < 3$ and $k_p-k_d > -4$. These conditions are illustrated in Figure \ref{fig:stableE1}. Take, for instance, $k_p = 1$ and $k_d = 2$, placing the closed-loop characteristic roots at $s = (-3\pm \boldsymbol{i}\sqrt{15})/6$, guaranteeing the closed-loop system stability. A common feature of closed-loop control systems is the presence of delays in the input/output channels.

\begin{figure}
    \centering
    \includegraphics[width=0.5\linewidth]{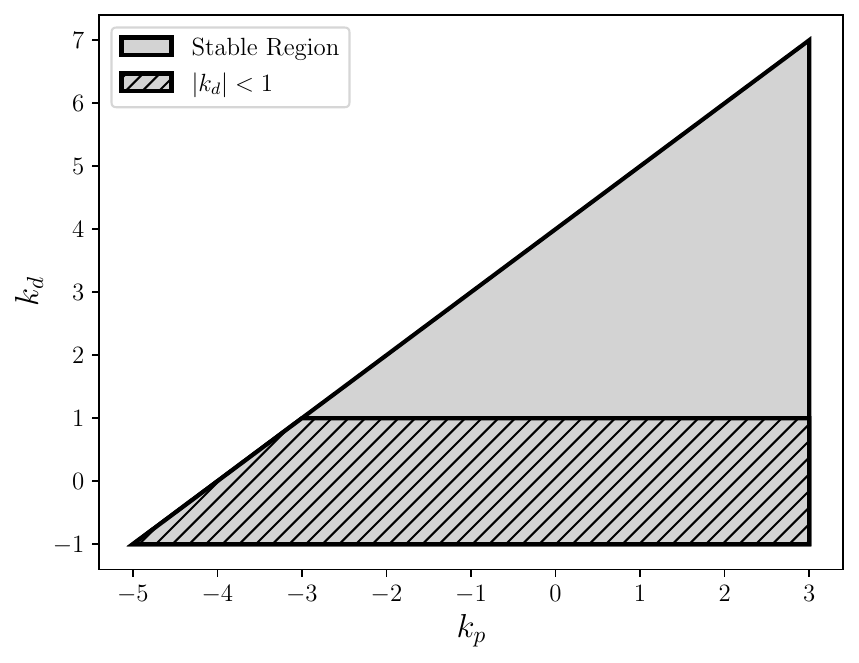}
    \caption{Stable region on the control parameters plane of system \eqref{eq:sysEx1}}
    \label{fig:stableE1}
\end{figure} 

On the presence of a delay $r\in\mathbb{R}_+$ (see Figure \ref{fig:Delay_ex1}), the characteristic equation rewrites:
\begin{equation}
    (1+k_de^{-rs})s^2+(4+(k_p-k_d)e^{-rs})s+(3-k_pe^{-rs}) = 0.
    \label{eq:neutralCharEq}
\end{equation}
Suppose $r\to0^+$, it is clear that $e^{-rs} \to 1$ uniformly on compact sets. Note that, with the presence of a delay, the term $(1+k_de^{-rs})$ corresponds to a difference equation, namely, $x(t)+k_dx(t-r)$, whose stability condition is given by $|k_d|<1$. With the given choice of gains, the associated difference equation is unstable in the presence of any $r>0$. We illustrate the effect of such a delay on the spectrum on Figure \ref{fig:neutralRoots}.

\begin{figure}
    \centering
    \tikzset{every picture/.style={line width=0.75pt}} %set default line width to 0.75pt        

\begin{tikzpicture}[x=0.75pt,y=0.75pt,yscale=-1,xscale=1]
%uncomment if require: \path (0,300); %set diagram left start at 0, and has height of 300

%Rounded Rect [id:dp851930029248353] 
\draw   (167.46,112.04) .. controls (167.46,109.04) and (169.9,106.6) .. (172.9,106.6) -- (227.56,106.6) .. controls (230.56,106.6) and (233,109.04) .. (233,112.04) -- (233,128.36) .. controls (233,131.36) and (230.56,133.8) .. (227.56,133.8) -- (172.9,133.8) .. controls (169.9,133.8) and (167.46,131.36) .. (167.46,128.36) -- cycle ;
%Straight Lines [id:da1319327352561065] 
\draw    (233,120.4) -- (360.4,120.4) ;
\draw [shift={(363.4,120.1)}, rotate = 179.96] [fill={rgb, 255:red, 0; green, 0; blue, 0 }  ][line width=0.08]  [draw opacity=0] (8.93,-4.29) -- (0,0) -- (8.93,4.29) -- cycle    ;
%Straight Lines [id:da33271356627992255] 
\draw    (333.4,120.2) -- (333.4,184) ;
%Straight Lines [id:da5395862588450662] 
\draw    (63.8,184) -- (63.8,134) ;
\draw [shift={(64,131)}, rotate = 90.22] [fill={rgb, 255:red, 0; green, 0; blue, 0 }  ][line width=0.08]  [draw opacity=0] (8.93,-4.29) -- (0,0) -- (8.93,4.29) -- cycle    ;
%Shape: Circle [id:dp7321092450690665] 
\draw   (55.4,120.4) .. controls (55.4,115.65) and (59.25,111.8) .. (64,111.8) .. controls (68.75,111.8) and (72.6,115.65) .. (72.6,120.4) .. controls (72.6,125.15) and (68.75,129) .. (64,129) .. controls (59.25,129) and (55.4,125.15) .. (55.4,120.4) -- cycle ;
%Straight Lines [id:da7911712634231994] 
\draw    (9,120.4) -- (52.4,120.4) ;
\draw [shift={(55.4,121.4)}, rotate = 180.25] [fill={rgb, 255:red, 0; green, 0; blue, 0 }  ][line width=0.08]  [draw opacity=0] (8.93,-4.29) -- (0,0) -- (8.93,4.29) -- cycle    ;
%Straight Lines [id:da64246924922145] 
\draw    (72.6,120.4) -- (166.6,120.4) ;
%Rounded Rect [id:dp32628252633888577] 
\draw   (230.8,176.68) .. controls (230.8,173.98) and (232.98,171.8) .. (235.68,171.8) -- (258.12,171.8) .. controls (260.82,171.8) and (263,173.98) .. (263,176.68) -- (263,191.32) .. controls (263,194.02) and (260.82,196.2) .. (258.12,196.2) -- (235.68,196.2) .. controls (232.98,196.2) and (230.8,194.02) .. (230.8,191.32) -- cycle ;
%Rounded Rect [id:dp3742697622527076] 
\draw   (153.4,176.4) .. controls (153.4,173.64) and (155.64,171.4) .. (158.4,171.4) -- (199.4,171.4) .. controls (202.16,171.4) and (204.4,173.64) .. (204.4,176.4) -- (204.4,191.4) .. controls (204.4,194.16) and (202.16,196.4) .. (199.4,196.4) -- (158.4,196.4) .. controls (155.64,196.4) and (153.4,194.16) .. (153.4,191.4) -- cycle ;
%Shape: Rectangle [id:dp9516677725603723] 
\draw  [fill={rgb, 255:red, 0; green, 0; blue, 0 }  ,fill opacity=0.09 ][dash pattern={on 4.5pt off 4.5pt}] (147.8,161.4) -- (273,161.4) -- (273,203.8) -- (147.8,203.8) -- cycle ;
%Straight Lines [id:da6544561583215812] 
\draw    (333.4,184) -- (273.83,184) ;
%Straight Lines [id:da8818214455684856] 
\draw    (63.8,184) -- (146.6,184) ;
%Straight Lines [id:da6461003916404549] 
\draw    (204.6,184) -- (230.83,184) ;

% Text Node
\draw (313.35,102) node [anchor=north west][inner sep=0.75pt]    {$Y( s)$};
% Text Node
\draw (8.4,102) node [anchor=north west][inner sep=0.75pt]    {$R(s)$};
% Text Node
\draw (80,102) node [anchor=north west][inner sep=0.75pt]    {$E(s)$};
% Text Node
\draw (235.4,175.2) node [anchor=north west][inner sep=0.75pt]    {$e^{-rs}$};
% Text Node
\draw (115,165) node [anchor=north west][inner sep=0.75pt]    {$U(s)$};
% Text Node
\draw (162.8,175.3) node [anchor=north west][inner sep=0.75pt]    {$C(s)$};
% Text Node
\draw (177.6,110.8) node [anchor=north west][inner sep=0.75pt]    {$H_{yu}(s)$};

\end{tikzpicture}
    \caption{Presence of a delay on the input/output channel of the system.}
    \label{fig:Delay_ex1}
\end{figure}
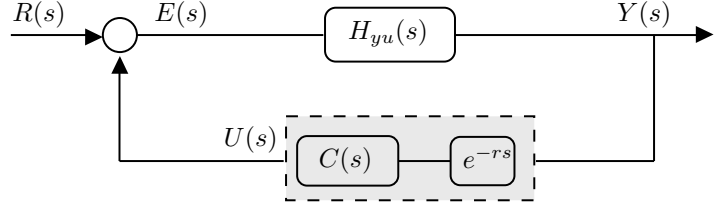

\begin{figure}
    \centering
    \includegraphics[width=0.5\linewidth]{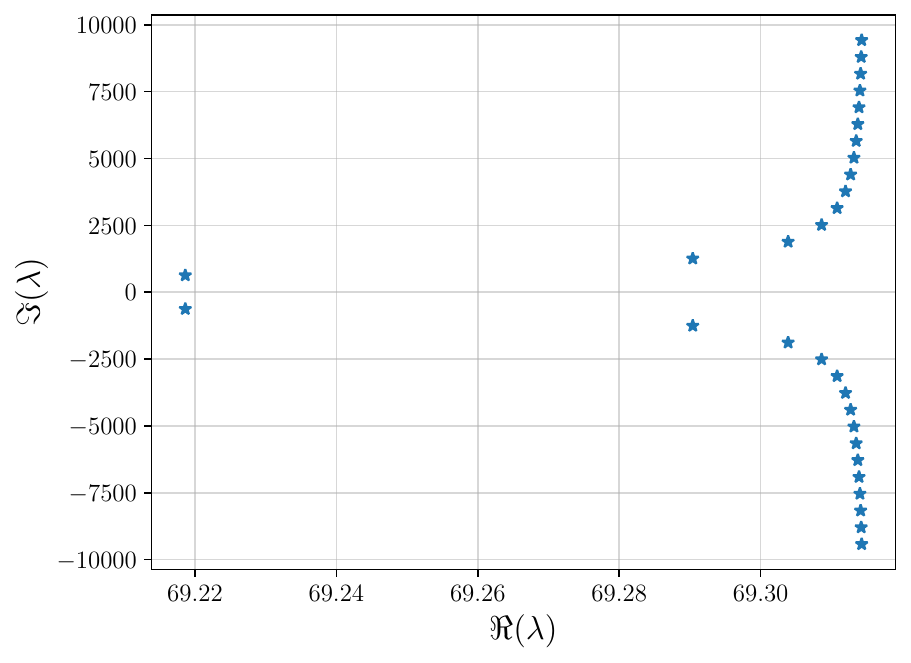}
    \caption{Unstable solutions of \eqref{eq:neutralCharEq} when $kp = 1$, $k_d=2$ and $r=0.01$.}
    \label{fig:neutralRoots}
\end{figure}

\end{example}

\begin{example}[Delay-difference approximations]
    
Next, we consider the following system:
\begin{equation}
    \Sigma := \left\{ \begin{array}{l}
         \dot{x}(t) = \left[\begin{matrix}
             -1 & 1 & 1\\ 1 & 0 & 0\\ 0 &1 &0
         \end{matrix} \right]x(t) + \left[\begin{matrix}
             -5\\0\\0
         \end{matrix} \right] u(t)  \\[20pt]
          y(t) = [\begin{matrix}
              1 & 0 & 0
          \end{matrix}] x(t)
    \end{array} \right.,
\end{equation}
which has a corresponding closed-loop characteristic function $\Delta: \mathbb{C} \rightarrow \mathbb{C}$ given by:
\begin{equation}
    \Delta(s) = s^3+s^2-s-1-5s^2(k_p+k_ds).
\end{equation}
Let us consider $k_p = 3$ and $k_d = 2$, which exponentially stabilize the system by placing the closed-loop characteristic roots at $s = -1.53$ and $s=-0.012 \pm 0.269 \boldsymbol{i}$. We propose replacing the derivative by a finite difference approximation, namely:
\[
    s \approx \frac{1-e^{-rs}}{r}.
\]
Note that this approximation can be rewritten as the inclusion of an approximate identity by:
\[
    \frac{1-e^{-rs}}{r} = s\left(\frac{1-e^{-rs}}{rs}\right).
\]
The characteristic equation then rewrites:
\begin{equation}
    s^3 + s^2 - s - 1 - 5s^2 \left(k_p +k_d \left(\frac{1-e^{-rs}}{r}\right)\right) = 0,
\end{equation}
introducing the scaling variable $\bar{s} = sr$, and multiplying by $r$ we can obtain the following expression:
\[
    \bar{s}^3+r\bar{s}^2-r^2\bar{s}-1-5\bar{s}^2(k_pr+k_d(1-e^{-\bar{s}})) = 0.
\]
Note that by taking $r \to 0^+$, we should expect the approximation of the derivative to improve, however, observe that with $r\to 0^+$, the characteristic function converges on compact sets to:
\begin{equation}
    \bar{s}^2\underbrace{(\bar{s}-5k_d(1-e^{-\bar{s}}))}_{f(\bar{s};k_d)}=0.
    \label{eq:ddaproxEx}
\end{equation}
From Figure \ref{fig:ddaprox} it is clear that there exists a positive zero of \eqref{eq:ddaproxEx}, and by Rouché's theorem, there is a characteristic root converging to this zero as $r\rightarrow0^+$. Moreover, after rescaling such a root goes to $+\infty$ as $r\to0^+$. Similarly to the input/output delay case (see, for instance, \cite{Hale2002Strong, Hale2003Stability}.), this situation restricts the choice of the derivative gains or matrices, and illustrates the importance of properly choosing them. 

\begin{figure}
    \centering
    \includegraphics[width=0.5\linewidth]{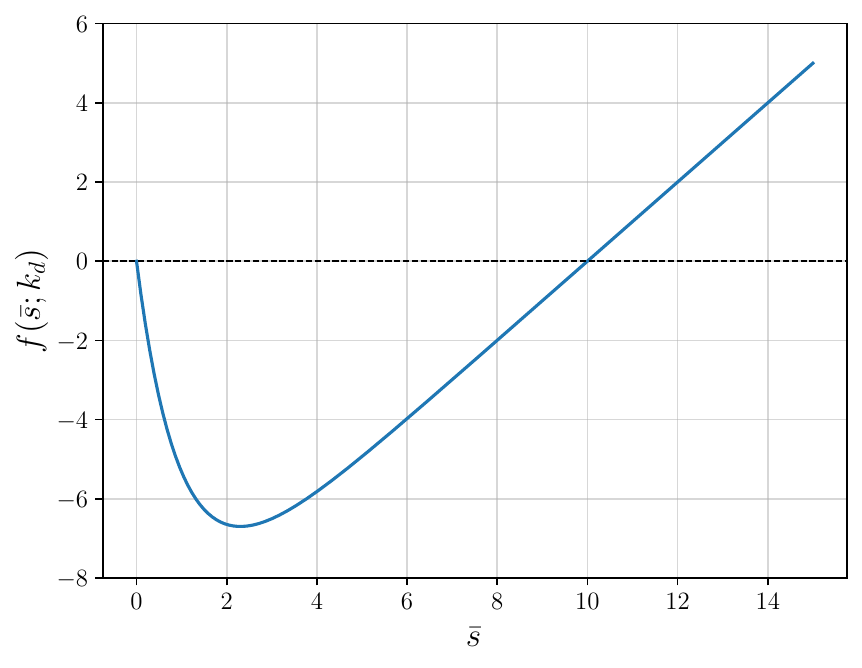}
    \caption{Evolution of $\bar{s}\mapsto f(\bar{s};k_d)$ with $k_d = 2$.}
    \label{fig:ddaprox}
\end{figure}

\end{example}

\begin{example}[Choice of filter constant]
Finally, consider the following system: 
\begin{equation}
    \Sigma := \left\{ \begin{array}{l}
         \dot{x}(t) = \left[\begin{matrix}
             0 & 5\\ 1 & 0
         \end{matrix} \right]x(t) + \left[\begin{matrix}
             1\\-1
         \end{matrix} \right] u(t)  \\[10pt]
          y(t) = [\begin{matrix}
              1 & 0
          \end{matrix}] x(t)
    \end{array} \right.,
    \label{eq:exDelayMargMot}
\end{equation}
The lack of $w$-stability implies that stability can be lost even for arbitrarily small values of $r$, as illustrated with the previous example. An intuitive solution to make the feedback interconnection strictly proper and solve these fragility problems is to add a low-pass filter to the controller, on the condition that the filter itself does not destabilize the system. This condition is not trivially satisfied (note that the $1/(rs+1)$ is also an approximate identity). In this spirit, let us once again consider a PD controller, but now with the inclusion of a low-pass filter to the derivative action, with cut-off frequency $\omega_c = 1/r$, leading to the following characteristic equation:

\begin{equation}
    s^2-5+(5-s)\left(k_p+\frac{k_ds}{rs+1}\right)=0.
\end{equation}

A similar issue as the one presented with the inclusion of the approximation of the derivative action arises here, and stability can be destroyed if the filter acts as a destabilizing perturbation as discussed in \cite{michiels2022filter}. Let us assume that is not the case, and we choose the value of $k_d$ such that stability is preserved for sufficiently small $r$. Take, for example, $k_d = 1/2$ and $k_p = 2$. 

Two issues are to be faced here. First, with the ideal PD controller, the dominant roots of the closed-loop system are located at $s = -\tfrac{1}{2}(1\pm \sqrt{39}\boldsymbol{i})$. After inclusion of the filter, the spectra of the system would be necessarily changed, even if stability is preserved. If, for example, we take $r = 0.01$, an additional root corresponding to $s=-47.057$ appears. Naturally, the smaller the value of $r$, the more to the left the additional root will move, making the closed-loop system closer to the nominal one. Stability is now guaranteed for small enough values of $r$, and for sufficiently small values of $r$, we also know a finite delay margin exists. However, we have no information on how big these two margins are. Indeed, as depicted in Figure \ref{fig:delayMargMot}, both margins are very small, and a small perturbation may destroy the stability of the system.

\begin{figure}
    \centering
    \includegraphics[width=0.5\linewidth]{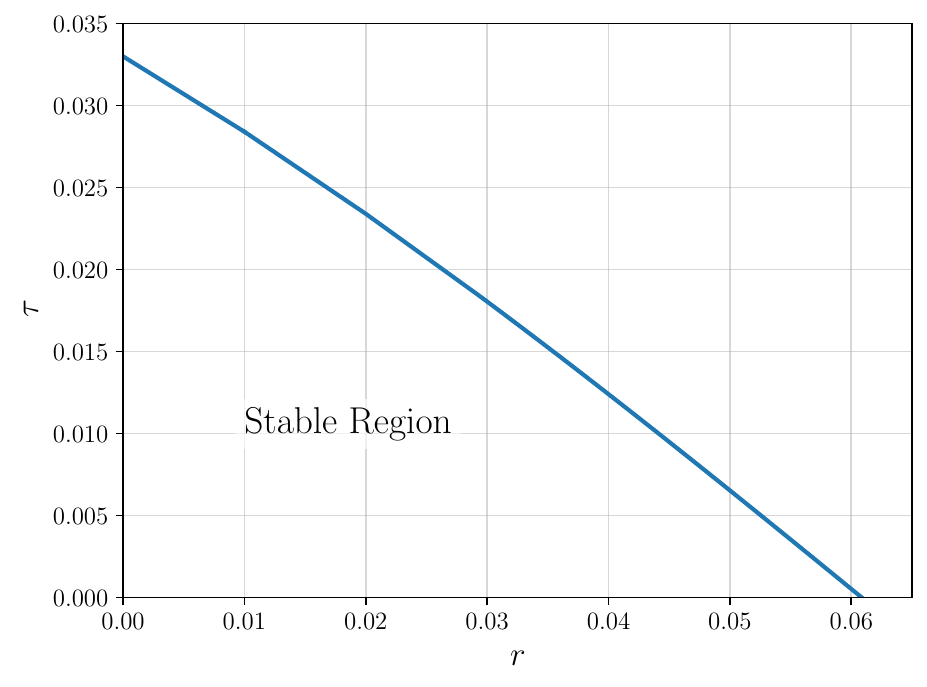}
    \caption{Stability region in the $(r, \tau)$-plane for $\tau$ a delay in the input channel, and $r$ the filter constant from the closed-loop between system \eqref{eq:exDelayMargMot} and a PD controller}
    \label{fig:delayMargMot}
\end{figure}

\end{example}

The issue related to safely including a low-pass filter has been deeply discussed in the literature. This has conducted to the notion of filtered spectral abscissa, which corresponds to the limit superior of the spectral abscissa when the cut-off frequency of an inserted filter goes to infinity. Thus, if the filtered spectral abscissa is negative, a filter with \emph{sufficiently high} cut-off frequency can be inserted while preserving stability. This results in $w$-stability and nonzero stability margins for any approximate identity, but no guarantees are given in terms of the size of such margins.

The main idea of works such as \cite{appeltans2022analysis} is to design strongly stabilizing controllers by optimizing the spectral abscissa, while constraining the control gains to those for which the filtered spectral abscissa is strictly negative. However, this approach has a main disadvantage, as we only guarantee that stability is not lost for sufficiently high cut-off frequency, that is, the system delay margin is not zero. However, as illustrated in the previous example, since the filter is not included in the control design, there are no guarantees in terms of robustness or performance, apart from the delay margin being finite. In the rest of this note we aim to propose a methodology such that the filter design is included together with the control itself.

\section{Problem statement}
    On this paper we consider LTI systems with discrete delays on the state and input/output channel of the form:
    \begin{equation}
        \Sigma :=\left\{
        \begin{array}{l}
            \dot{x}(t) = A_0 x(t) + \sum_{i = 1}^N A_i x(t-\tau_i) + B u(t-\tau_0) \\
            y(t) = C x(t)\end{array}
        \right.,
    \end{equation}
    where $x \in \mathbb{R}^n$ is the state variable, $u \in \mathbb{R}^m$ is the input, $y \in \mathbb{R}^p$ is the output of the system, $\tau_0 \in \mathbb{R}_+$ is the input delay, $\tau_1 < \tau_2 < \dots < +\infty$ are discrete state delays, $A_k \in \mathbb{R}^{n\times n}$ for $k \in \{0, 1, \dots,N\}$, $B\in \mathbb{R}^{n \times m}$, and $C \in \mathbb{R}^{p \times n}$. We also consider a classical PID control law, including a low-pass filter on the derivative action, given by:
    \begin{equation}
        \left\{ \begin{array}{l}u(t) = K_p y(t) + K_d z(t) + K_i \int_0^t y(s) ds\\
        T \dot{z}(t) + z(t) = \dot{y}(t) \end{array} \right.,
    \end{equation}
    where $K_p$, $K_i$ and $K_d$ are $m \times p$ real valued matrices, and $T>0$ is the filter constant, and $\omega_c=1/T$ is the cut-off frequency associated to the filter. 
    % \begin{assumption}
    % \label{asumption:nominal stab}
    %     In the absence of input/output channels delays (i.e., for $\tau_0 = 0$), it is assumed that there exists a set of matrices (gains) $(K_p, K_i, K_d)$ that stabilizes the nominal delay-free closed-loop system in the absence of the filter (i.e., for $T=0$).
    % \end{assumption}
    
   The closed-loop dynamics can be equivalently represented in the form of a delay differential--algebraic equation (DDAE). It is worth noting that, thanks to the inclusion of the low-pass filter, the closed-loop system is of the retarded type. We introduce the auxiliary variable $w(t)$:
    \[
        \dot{w}(t) = y(t),
    \]
    and define the extended state vector:
    \[
        \xi(t) = \begin{bmatrix} x(t)^{\top} & w(t)^{\top} & z(t)^{\top} \end{bmatrix}^{\top}.
    \]
    With this notation, the closed-loop system can be written as:
    \begin{equation}
    \begin{aligned}
        \dot{\xi}(t) 
        &- 
        \begin{bmatrix}
            0 & 0 & 0 \\
            0 & 0 & 0 \\
            T^{-1} C & 0 & 0
        \end{bmatrix} 
        \dot{\xi}(t) \\
        &=
        \begin{bmatrix}
            A & 0 & 0 \\
            C & 0 & 0 \\
            0 & 0 & -T^{-1}I
        \end{bmatrix} \xi(t) 
        + 
        \begin{bmatrix}
            B K_p C & B K_i & B K_d \\
            0 & 0 & 0 \\
            0 & 0 & 0
        \end{bmatrix} \xi(t-\tau_0)  \\
        &\quad + 
        \sum_{k=1}^m
        \begin{bmatrix}
            A_k & 0 & 0 \\
            0 & 0 & 0 \\
            0 & 0 & 0
        \end{bmatrix} \xi(t-\tau_k).
    \end{aligned}
    \end{equation}
    
    The stability properties of this DDAE are governed by the associated characteristic equation. 
    For compactness, define the block matrices:
    \begin{equation*}
        E = I - 
        \begin{bmatrix}
            0 & 0 & 0 \\
            0 & 0 & 0 \\
            T^{-1} C & 0 & 0
        \end{bmatrix}, 
        \quad
        M_0 =
        \begin{bmatrix}
            A & 0 & 0 \\
            C & 0 & 0 \\
            0 & 0 & -T^{-1} I
        \end{bmatrix},
    \end{equation*}
    \begin{equation*}
        M_1 =
        \begin{bmatrix}
            B K_p C & B K_i & B K_d \\
            0 & 0 & 0 \\
            0 & 0 & 0
        \end{bmatrix}.
    \end{equation*}
    
    The closed-loop characteristic matrix is then given by:
    \begin{equation}
        M(s) 
        = s E - M_0 - M_1 e^{-\tau_0 s}
        - \sum_{k=1}^m 
        \begin{bmatrix}
            A_k & 0 & 0 \\
            0 & 0 & 0 \\
            0 & 0 & 0
        \end{bmatrix} e^{-\tau_k s},
        \label{eq:charMatrix}
    \end{equation}
    and the characteristic equation is given by:
    \begin{equation}
        \det\!\big(M(s)\big) = 0.
    \end{equation}
    
    As in the finite-dimensional case, exponential stability is determined by the spectral abscissa function $(K_p, K_i, K_d, T) \mapsto \rho(K_p, K_i, K_d, T)$ of the closed-loop system, defined as:
    \begin{equation}
        \rho(K_p, K_i, K_d, T)  = \max_{s \in \mathbb{C}} \Big\{ \Re(s) : \det\!\big(M(s)\big) = 0 \Big\}.
    \end{equation}
    Thanks to the inclusion of the filter, the system is of the retarded type, thus, the closed-loop system is exponentially stable if and only if the spectral abscissa satisfies $\rho < 0$. In order to incorporate practical constraints on the derivative filter, we seek to optimize the closed-loop spectral abscissa while restricting the filter parameter to a prescribed interval $T \in [T_{\min},T_{\max}]$. 
    The resulting constrained optimization problem can be formulated as
    \begin{equation}
    \begin{array}{cl}
    \underset{K_p,K_i,K_d,T}{\text{minimize}} & \rho(K_p,K_i,K_d,T) \\[1mm]
    \text{subject to} & T_{\min} \leq T \leq T_{\max}.
    \end{array}
    \end{equation}
    
    \subsection*{Motivating example} 
    Let us consider the following motivating example, adapted from Appeltans et al.~\cite{appeltans2022analysis}. Consider the system
    \begin{equation}
        \begin{matrix}
            & \dot{x}(t) = \left[\begin{matrix}
                -1 & \tfrac{1}{3} & 1\\ 
                1 & 0 & 0 \\ 
                0 & 1 & 0
            \end{matrix} \right] x(t)
            + 
            \left[\begin{matrix}
                2\\0\\0
            \end{matrix}\right] u(t-\tau_0) \\[20pt]
            & y(t) = 
            \left[\begin{matrix}
                0.5 & 0 & 0.5
            \end{matrix}\right] x(t).
        \end{matrix}
    \end{equation}

    Assume a PD controller of the form
    \[
        u(t) = k_p y(t) + k_d \dot{y}(t).
    \]
    The objective is to design the controller so as to optimize the spectral abscissa of the closed-loop system when the derivative action is implemented through a low-pass filter.

    In \cite{appeltans2022analysis}, the optimization is carried out for the delay-free system, under additional constraints ensuring that the filtered spectral abscissa is strictly negative. The resulting locally optimal gains are $k_p = -1.08015$ and $k_d = -1.04045$.

    Without filtering, the closed-loop system is neutral. In particular, since $|k_d| > 1$, stability is lost for any input delay $\tau_0 > 0$. This makes the inclusion of a low-pass filter necessary.

    Figure~\ref{fig:stableRegion} depicts the stability crossing curves in the $(T,\tau_0)$-plane. It can be observed that the delay margin, that is, the maximum admissible input delay $\tau_0$ preserving stability, depends strongly on the choice of the filter constant $T$. Moreover, for very small values of $T$, high-frequency modes may arise, leading to increased sensitivity to measurement noise.

    \begin{figure}
        \centering
        \includegraphics[width=0.5\linewidth]{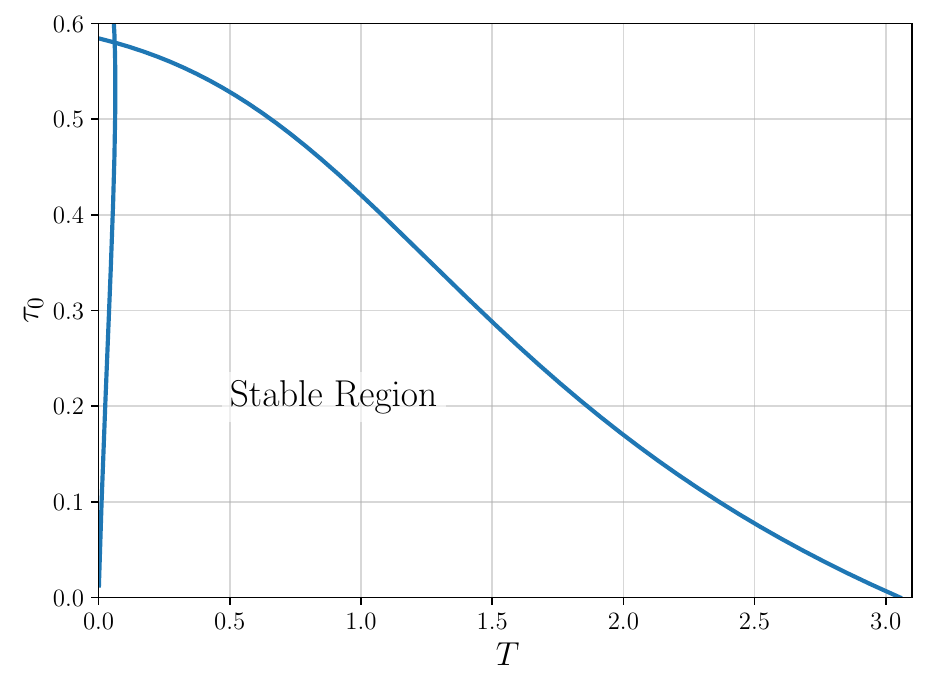}
        \caption{Stability crossing curves in the $(T,\tau_0)$-plane.}
        \label{fig:stableRegion}
    \end{figure}

    After introducing the filter, the closed-loop characteristic function $\Delta:\mathbb{C}\rightarrow\mathbb{C}$ becomes
    \begin{equation}
        \Delta(s)
        =
        (sT+1)(s^3 + s^2 -\tfrac{1}{3}s - 1 )
        -
        (s^2+1)\big((Tk_p+k_d)s + k_p\big).
    \end{equation}

    To illustrate potential applications of incorporating the filter into the design process, suppose that the system is required to tolerate at least a delay of $\tau_{\text{min}} = 0.2$. From Figure~\ref{fig:stableRegion}, and using the gains proposed in \cite{appeltans2022analysis}, this requirement imposes approximately $T \leq 1.75$.

    For instance, fixing $\tau_0 = 0.2$ and choosing $T = 0.1$, the spectral abscissa of the filtered closed-loop system with the gains of \cite{appeltans2022analysis} is $\rho = -0.1475$. If instead the three parameters $(k_p, k_d, T)$ are optimized jointly by minimizing the spectral abscissa subject to $T \in (0,1.75]$, a local minimizer is obtained at
    \[
        k_p = -1.0979, 
        \quad 
        k_d = -1.2259,
        \quad 
        T = 0.1740.
    \]

    It is important to note that modifying $k_p$ and $k_d$ alters the stability region shown in Figure~\ref{fig:stableRegion}. Consequently, the admissible interval for $T$ becomes gain-dependent. A natural strategy is therefore to solve the optimization problem, recompute the delay margin for the updated gains, and iterate with revised bounds on $T$. This idea forms the basis of the procedure proposed later in this work.

    With the jointly optimized parameters, the spectral abscissa improves to $\rho = -0.2435$. The rightmost characteristic roots for both configurations are shown in Figure~\ref{fig:rootsExample}.

    \begin{figure}
        \centering
        \includegraphics[width=0.5\linewidth]{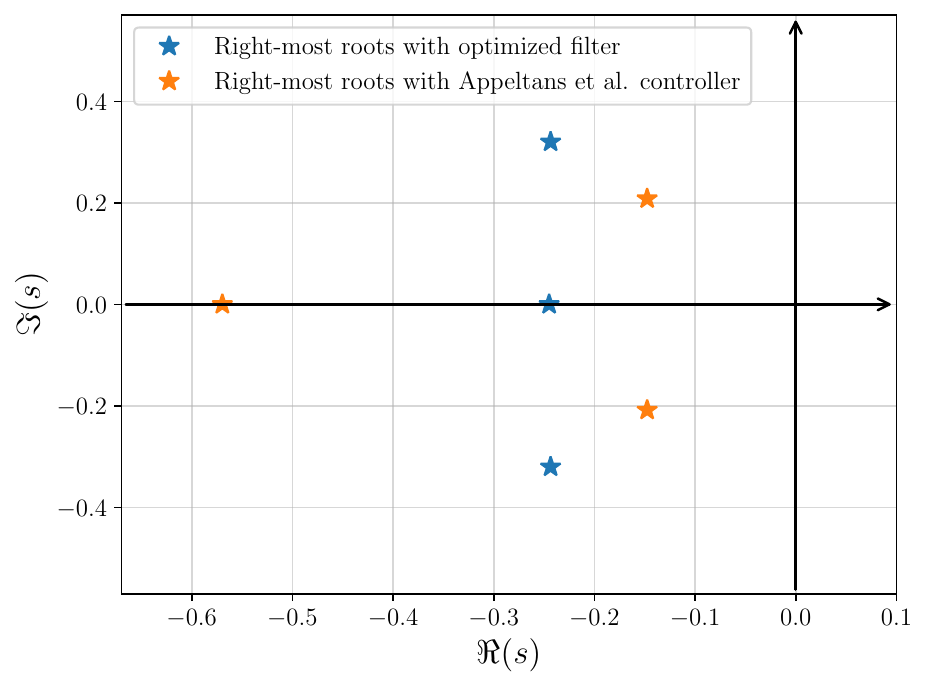}
        \caption{Comparison of the rightmost closed-loop roots obtained with fixed filtering and with joint controller–filter optimization.}
        \label{fig:rootsExample}
    \end{figure}

    This example highlights the importance of treating the filter constant as a control design parameter. Motivated by these observations, we propose a numerical procedure for the joint design of the PID gains and the filter constant, allowing the spectral abscissa to be optimized while selecting $T$ in accordance with the specific requirements and constraints of the problem.

\section{Filtered PID control design}

Motivated by the previous discussion, we formulate the design of a filtered PID controller as a spectral-abscissa minimization problem in which the filter constant is treated as a design variable. In particular, we seek to optimize the closed-loop spectral abscissa while restricting the filter constant to a prescribed admissible interval.

\begin{problem}[Spectral-abscissa minimization]
\label{prob:opt}
    \[
        \begin{array}{ll}
        \underset{K_p, K_i, K_d, T}{\text{minimize}} 
        & \rho(K_p, K_i, K_d, T) \\
        \text{subject to} 
        & T_{\min} < T < T_{\max}.
        \end{array}
    \]
\end{problem}

    To handle the constraints on $T$, we reformulate this constrained problem as an unconstrained one by introducing a penalty function. More precisely, we define the penalized spectral abscissa
    \begin{equation}
    \tilde{\rho}(K_p,K_i,K_d,T) :=
    \begin{cases}
    \rho(K_p,K_i,K_d,T), & T_{\min} \leq T \leq T_{\max}, \\[6pt]
    \rho(K_p,K_i,K_d,T_{\min}) + \alpha (T_{\min}-T), & T < T_{\min}, \\[6pt]
    \rho(K_p,K_i,K_d,T_{\max}) + \alpha (T-T_{\max}), & T > T_{\max},
    \end{cases}
    \label{eq:penalized_rho}
    \end{equation}
    where $\alpha > 0$ is a fixed penalty coefficient.
    With this construction, minimizing $\tilde{\rho}$ guarantees obtaining a feasible solution for the constrained problem. 
    %As a result, the optimized controller simultaneously achieves improved nominal performance and guarantees the prescribed level of robustness.
    
    It is worth noting the adopted kind of penalty functions are often discouraged due to the loss of smoothness they introduce. In the present setting, however, this drawback is not restrictive, since the spectral abscissa function is itself non-smooth. Consequently, non-smooth optimization techniques are required regardless of the penalty, making the proposed approach well suited to the problem at hand. On this work we use BFGS on SciPy \cite{2020SciPy} and HANSO as solvers. 
    
    The following result establishes that local minimizers of the penalized objective coincide with local minimizers of the constrained problem.

\begin{proposition}[Local optimality]
\label{prop:optimality}
    Let $\tilde{\rho}(K_p,K_i,K_d,T)$ denote the penalized spectral abscissa defined in \eqref{eq:penalized_rho}. If $(K_p^*,K_i^*,K_d^*,T^*)$ is a local minimizer of $\tilde{\rho}$, then it is also a local minimizer of Problem~\ref{prob:opt}.
\end{proposition}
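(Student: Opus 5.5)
The plan is to argue in two steps: first show that any local minimizer of $\tilde{\rho}$ must be feasible, and then show that near a feasible point the penalized function coincides with $\rho$ on the feasible set. Throughout, write $K=(K_p,K_i,K_d)$ for the gains. I will use the fact from the problem statement that, thanks to the filter, the closed loop is of retarded type for every $T>0$. Hence $\rho(K,T)$ is a finite real number, being a maximum over finitely many roots in any right half-plane. In particular $\rho(K,T_{\min})$ and $\rho(K,T_{\max})$ are finite, so every branch of \eqref{eq:penalized_rho} is a well-defined real-valued function.

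\emph{Step 1 (feasibility).} Let $(K^*,T^*)$ be a local minimizer of $\tilde{\rho}$, and suppose $T^*<T_{\min}$. Freeze the gains at $K^*$ and look at the one-dimensional map $T\mapsto\tilde{\rho}(K^*,T)$ on $(-\infty,T_{\min})$. By \eqref{eq:penalized_rho} it equals $\rho(K^*,T_{\min})+\alpha(T_{\min}-T)$. The first term does not depend on $T$, and $\alpha>0$, so this map is strictly decreasing in $T$. Consequently, for every $\varepsilon>0$ small enough that $T^*+\varepsilon<T_{\min}$, the point $(K^*,T^*+\varepsilon)$ lies in any prescribed neighbourhood of $(K^*,T^*)$ and satisfies $\tilde{\rho}(K^*,T^*+\varepsilon)=\tilde{\rho}(K^*,T^*)-\alpha\varepsilon<\tilde{\rho}(K^*,T^*)$. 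This contradicts local minimality. The case $T^*>T_{\max}$ is symmetric: decrease $T$ instead. Therefore $T_{\min}\le T^*\le T_{\max}$.

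\emph{Step 2 (local optimality for the constrained problem).} Since $(K^*,T^*)$ is a local minimizer of $\tilde{\rho}$, there is a neighbourhood $\mathcal{N}$ of $(K^*,T^*)$ with $\tilde{\rho}(K^*,T^*)\le\tilde{\rho}(K,T)$ for all $(K,T)\in\mathcal{N}$. By Step 1 and the first branch of \eqref{eq:penalized_rho}, $\tilde{\rho}(K^*,T^*)=\rho(K^*,T^*)$. For every feasible $(K,T)\in\mathcal{N}$, i.e.\ with $T_{\min}\le T\le T_{\max}$, we likewise have $\tilde{\rho}(K,T)=\rho(K,T)$. Hence $\rho(K^*,T^*)\le\rho(K,T)$ for all feasible points in $\mathcal{N}$. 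This is exactly local minimality of $(K^*,T^*)$ for Problem~\ref{prob:opt}. Note that no continuity of $\rho$ is needed anywhere, which matters because the spectral abscissa is non-smooth.

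The only delicate point, which I expect to be the main obstacle, is the mismatch between the open constraint $T_{\min}<T<T_{\max}$ in Problem~\ref{prob:opt} and the closed interval used in \eqref{eq:penalized_rho}. Step 1 only rules out $T^*$ strictly outside $[T_{\min},T_{\max}]$. A minimizer with $T^*=T_{\min}$ or $T^*=T_{\max}$ can genuinely occur, for instance when $\rho(K^*,\cdot)$ increases on $[T_{\min},T_{\max}]$. I would therefore state the conclusion for the closure of the feasible set, i.e.\ with the constraint $T_{\min}\le T\le T_{\max}$, which is consistent with the first branch of \eqref{eq:penalized_rho}. I would add a remark that, if $T^*$ lies in the open interval, it is also a local minimizer of the problem with strict inequalities as literally stated.
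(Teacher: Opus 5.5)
Your proof is correct and follows essentially the same route as the paper. You rule out $T^*\notin[T_{\min},T_{\max}]$ because, with the gains frozen, the linear penalty strictly decreases as $T$ moves toward the nearest endpoint, and then you use that $\tilde{\rho}=\rho$ on the feasible set. Your remark on the mismatch between the strict inequalities in Problem~\ref{prob:opt} and the closed interval in \eqref{eq:penalized_rho} is a legitimate sharpening: the paper's proof likewise only establishes $T^*\in[T_{\min},T_{\max}]$, and hence local optimality with respect to the closed constraint, without noting this.
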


\begin{proof}
For fixed controller gains, the function 
$T \mapsto \tilde{\rho}(K_p,K_i,K_d,T)$
is strictly larger outside the interval $[T_{\min},T_{\max}]$ than at the corresponding boundary points. Hence, no point with $T < T_{\min}$ or $T > T_{\max}$ can be a local minimizer of $\tilde{\rho}$, since moving $T$ toward the nearest endpoint strictly decreases the objective.

Let $(K_p^*,K_i^*,K_d^*,T^*)$ be a local minimizer of $\tilde{\rho}$. The previous observation implies that $T^* \in [T_{\min},T_{\max}]$. On this feasible interval, the penalized objective coincides with the original spectral abscissa, i.e.,
\[
\tilde{\rho}(K_p,K_i,K_d,T)=\rho(K_p,K_i,K_d,T),
\quad
T \in [T_{\min},T_{\max}].
\]
Therefore, $(K_p^*,K_i^*,K_d^*,T^*)$ is a local minimizer of $\rho$ restricted to the feasible set, which completes the proof.
\end{proof}

\subsection*{Control design procedure}

The proposed co-design strategy for the PID gains and the filter constant is summarized in Algorithm~\ref{alg:filteredPID}. The procedure combines nonsmooth spectral-abscissa minimization with a practical initialization strategy ensuring closed-loop stability.

\begin{algorithm}
\caption{Filtered PID control design}
\label{alg:filteredPID}
\begin{algorithmic}[1]
\REQUIRE System matrices and delays $(A_0,A_1,\dots,A_N,B,C,\tau_1,\dots,\tau_N)$,
initial strongly stabilizing gains $(K_p^{(0)},K_i^{(0)},K_d^{(0)})$,
penalty parameter $\alpha>0$
\ENSURE Optimal filtered PID gains $(K_p^\star,K_i^\star,K_d^\star)$ and filter constant $T^\star$

\STATE \textbf{Initialization:}
\STATE Choose $(K_p^{(0)},K_i^{(0)},K_d^{(0)})$ such that the closed-loop system is strongly stable.
\STATE Select an admissible interval $[T_{\min},T_{\max}]$ for the filter constant.
\STATE Choose $T^{(0)}\in[T_{\min},T_{\max}]$.

%\STATE \textbf{Iterative optimization:}
%\FOR{$k=0,1,2,\dots$ until
% $\displaystyle
% \frac{\big|\tilde{\rho}^{(k+1)}-\tilde{\rho}^{(k)}\big|}
% {1+\big|\tilde{\rho}^{(k)}\big|}
% < \varepsilon$}
    \STATE Solve a local minimization problem
    \[
        \underset{K_p,K_i,K_d,T}{\text{minimize}}
        \quad
        \tilde{\rho}(K_p,K_i,K_d,T)
    \]
    % initialized at $(K_p^{(k)},K_i^{(k)},K_d^{(k)},T^{(k)})$.
    % \STATE Let $(K_p^{(k+1)},K_i^{(k+1)},K_d^{(k+1)},T^{(k+1)})$
    % denote the resulting local minimizer.
%\ENDFOR

\STATE \textbf{Output:}
$(K_p^\star,K_i^\star,K_d^\star,T^\star)$
\end{algorithmic}
\end{algorithm}

In the next section, we illustrate the proposed co-design methodology through a numerical example, highlighting how simultaneous optimization of the PID gains and the filter constant can improve both spectral performance and system robustness.

\section{Numerical examples}

In this section, we present two numerical examples to illustrate the proposed filtered PID control design framework. The first example aims to illustrate the importance of considering the filter as part of the control design when working with derivative actions, while the second example demonstrates how the interval $[T_\text{min}, T_\text{max}]$ in Algorithm \ref{alg:filteredPID} can be adapted in an outer iteration to arrive to a good compromise between performance, in terms of spectral abscissa, and delay margin.

\begin{example}

We consider a single-input single-output linear time-invariant system, given by:
\begin{equation}
\label{eq:example_1}
    \Sigma := 
    \left\{
    \begin{array}{l}
    \dot{x}(t) =
    \begin{bmatrix}
    0 & 1 & 0 \\
    0 & 0 & 1 \\
    -2 & -3 & -1
    \end{bmatrix} x(t)
    +
    \begin{bmatrix}
    0 \\ 0 \\ 1
    \end{bmatrix} u(t-\tau_0), \\[3mm]
    y(t) = [\,1 \;\; 0 \;\; -0.5\,]\, x(t),
    \end{array}
    \right.
\end{equation}
where $x(t)\in\mathbb{R}^3$, $u(t)\in\mathbb{R}$, and $\tau_0 \geq 0$ denotes a constant input delay.
The system is open-loop stable and admits the transfer function representation:
\begin{equation}
    H_{yu}(s) := \frac{1 - 0.5s^2}{s^3 + s^2 + 3s + 2}e^{-\tau_0s}.
\end{equation}
Although stable, the system exhibits non-minimum-phase behavior and slow dynamics, with open-loop poles located at $s = -0.7152$ and $s = -0.1423 \pm 1.666\,\boldsymbol{i}$.

The control objective is to design a filtered PID controller of the form
\begin{equation}
\label{eq:example_pid}
    \left\{
    \begin{array}{l}
    u(t) = k_p y(t) + k_i \displaystyle\int_0^t y(s)\,ds + k_d z(t), \\[3mm]
    T \dot{z}(t) + z(t) = \dot{y}(t),
    \end{array}
    \right.
\end{equation}
such that the spectral abscissa of the closed-loop system including the filter is minimized.
With the ideal derivative action, the corresponding closed-loop characteristic function $\Delta : \mathbb{C} \to \mathbb{C}$ is given by
\begin{equation}
    \Delta(s) = s^4 + s^3 + 3s^2 + 2s + e^{-\tau_0s}(1 - 0.5s^2)\bigl(k_d s^2 + k_p s + k_i\bigr).
\end{equation}

\subsubsection*{Classical design without arbitrary small $T$.}
We first minimize the spectral abscissa of the closed-loop system without considering the inclusion of the derivative filter. Using a BFGS-based optimization algorithm, a local minimizer is obtained with
\[
    k_p = 0.6439, \qquad
    k_d = 1.9, \qquad
    k_i = 0.4222.
\]
These gains place the dominant closed-loop roots at
$s = -0.3004 \pm 0.0898\,\boldsymbol{i}$ when $\tau_0 = 0$.
However, since $|k_d| > 1$, the resulting closed-loop system becomes neutral and becomes unstable for any input delay $\tau_0 > 0$. As a consequence, the inclusion of a low-pass filter in the derivative action is necessary.

Choosing a small filter constant, e.g.\ $T = 10^{-3}$, preserves approximately the same spectral abscissa in the delay-free case. The corresponding stability region in the $(T,\tau_0)$-plane is shown in Figure~\ref{fig:stableRegionClassicEx}. It can be observed that, for this choice of parameters, the admissible delay margin is extremely small, with $\tau_{\min} < 7 \times 10^{-3}$.

\begin{figure}[t]
\centering
    \includegraphics[width=0.5\linewidth]{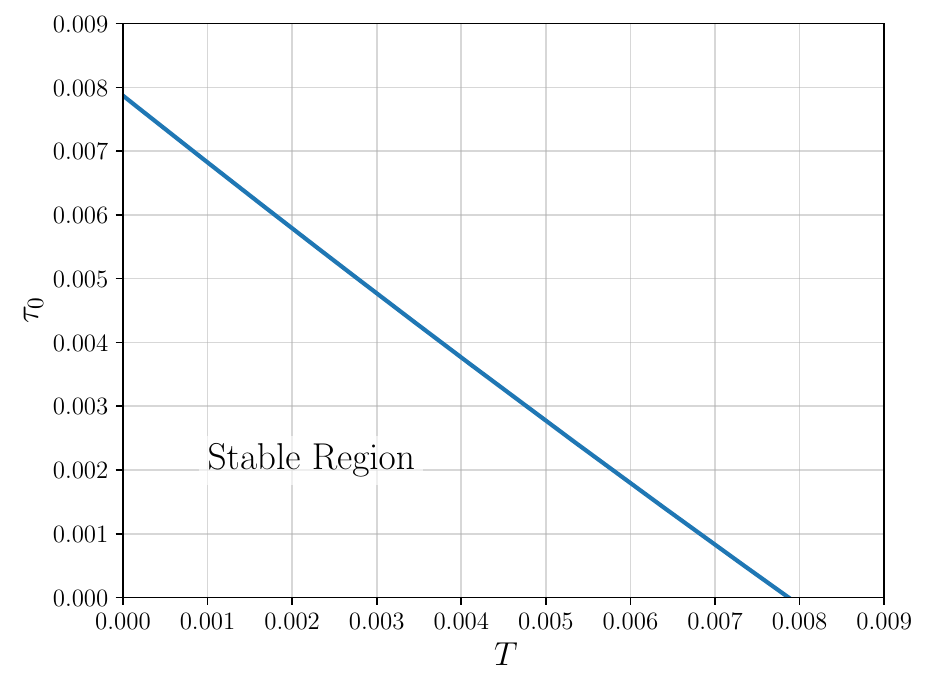}
    \caption{Stability region in the $(T,\tau_0)$-plane for the controller obtained using the classical design approach.}
    \label{fig:stableRegionClassicEx}
\end{figure}

This behavior may be due to the fact that $k_d>1$, which makes high-frequency roots more sensitive, and, as consequence, the controller less robust. 

\subsubsection*{Proposed delay-robust design.}
We now apply the filtered PID design procedure described in Algorithm \ref{alg:filteredPID}. The objective is to minimize the spectral abscissa of the filtered system, while aiming to increase the delay margin. 

When incorporating the filter into the design procedure, it is often advantageous to initialize the optimization with a relatively large value of $T$, allowing the algorithm to explore regions associated with improved robustness. In particular, we consider the initial values $k_p = k_d = k_i = T = 0.1$, while restricting $T \in[0.001,0.2]$. Although this choice is arbitrary, it satisfies the minimal requirement of the algorithm, namely that the initial gains stabilize the delay-free system.

Similarly to the previous approach, using a BFGS optimization method we obtain a local minimizer with the following controller parameters:
\[
    k_p = 0.0360, \qquad
    k_i = 0.1103, \qquad
    k_d = 0.5005, \qquad
    T = 0.1543 .
\]

In the delay-free case, these values place the dominant closed-loop roots at
\[
    s = -0.1011 \pm 1.6262\,\boldsymbol{i},
\]
which correspond to a slower convergence rate than that obtained with the classical design. Nevertheless, the resulting closed-loop system admits a delay margin of approximately $\tau_{\min} \approx 0.85$, representing an improvement of nearly two orders of magnitude compared with the classical approach.

At first glance, it may appear counterintuitive that a larger value of $T$ leads to an improved delay margin. Indeed, the filter can be interpreted as introducing dynamics that resemble a delay, which would suggest that increasing $T$ should reduce robustness. However, a larger value of $T$ attenuates high-frequency dynamics and shifts the corresponding characteristic roots further into the left half-plane. As a result, although the closed-loop response becomes slower, the system exhibits significantly improved robustness with respect to input delays.

The associated stability region in the $(T,\tau_0)$-plane is shown in Figure ~\ref{fig:stabFiltered}. This substantial increase in delay robustness is achieved by explicitly incorporating the filter constant $T$ into the optimization process.

\begin{figure}
\centering
    \includegraphics[width=0.5\linewidth]{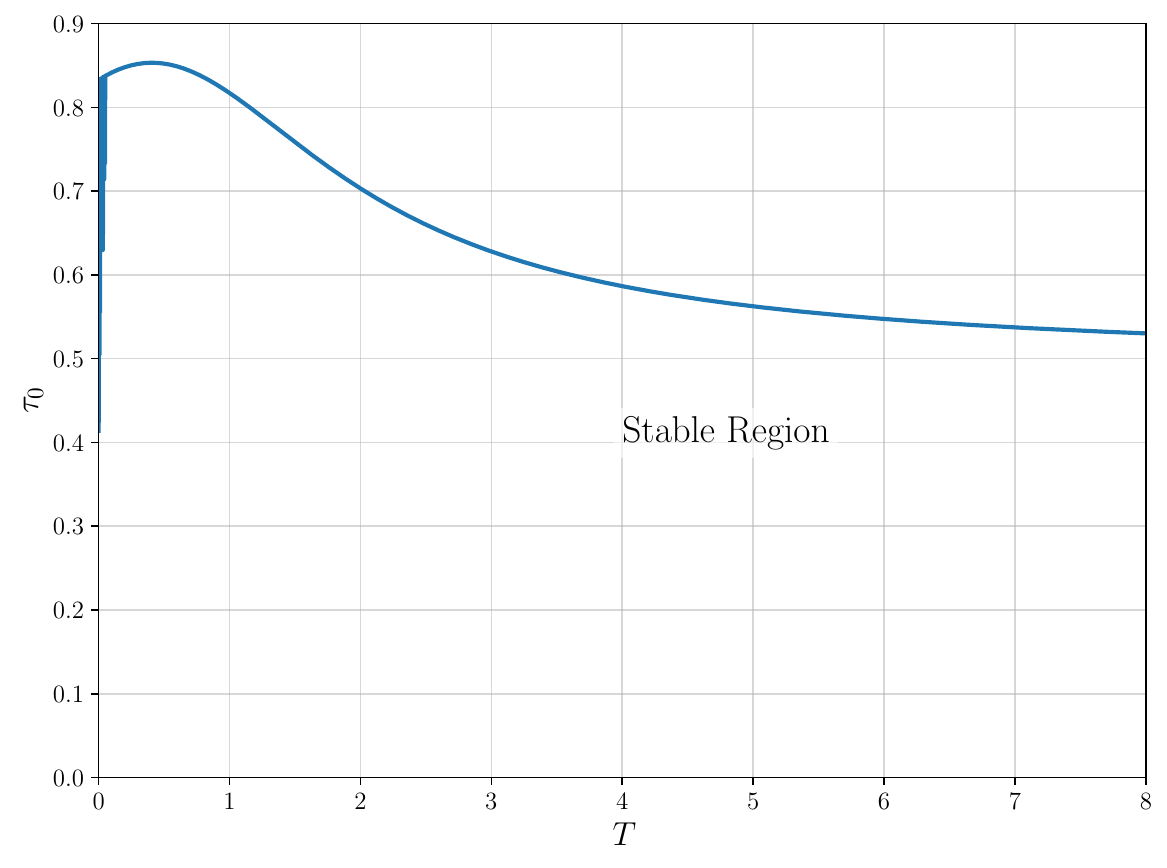}
    \caption{Stability region in the $(T,\tau_0)$-plane for the controller obtained using Algorithm~\ref{alg:filteredPID}. The shaded region indicates combinations of filter constant $T$ and input delay $\tau_0$ for which the closed-loop system remains exponentially stable.}
    \label{fig:stabFiltered}
\end{figure}

To better understand how the stability region is modified and how this impacts the robustness of the closed-loop system, Figs.~\ref{fig:roots_delay} and~\ref{fig:roots_filter} illustrate the evolution of the rightmost characteristic roots under both controller configurations.

More precisely, Figure ~\ref{fig:roots_delay} shows the movement of the dominant roots as the input delay $\tau_0$ varies, while Figure ~\ref{fig:roots_filter} depicts their dependence on the filter constant $T$. These plots highlight the different sensitivities of the closed-loop spectrum with respect to delay and filtering parameters under the classical and filtered design strategies.

\begin{figure}
    \centering
    \includegraphics[width=0.8\linewidth]{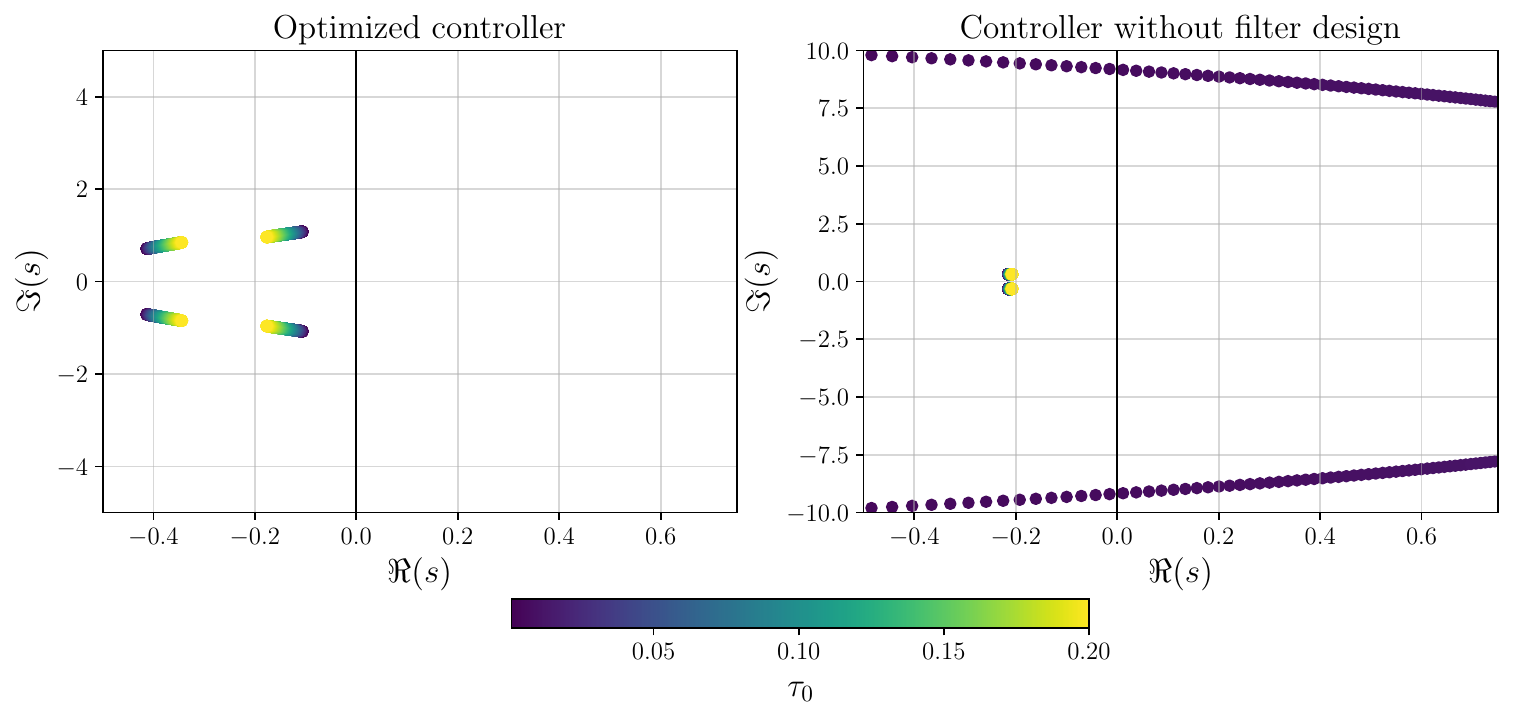}
    \caption{Evolution of the rightmost characteristic roots as the input delay $\tau_0$ varies for the classical and filtered included PID designs. The filtered design exhibits significantly improved robustness with respect to delay variations.}
    \label{fig:roots_delay}
\end{figure}

\begin{figure}
    \centering
    \includegraphics[width=0.8\linewidth]{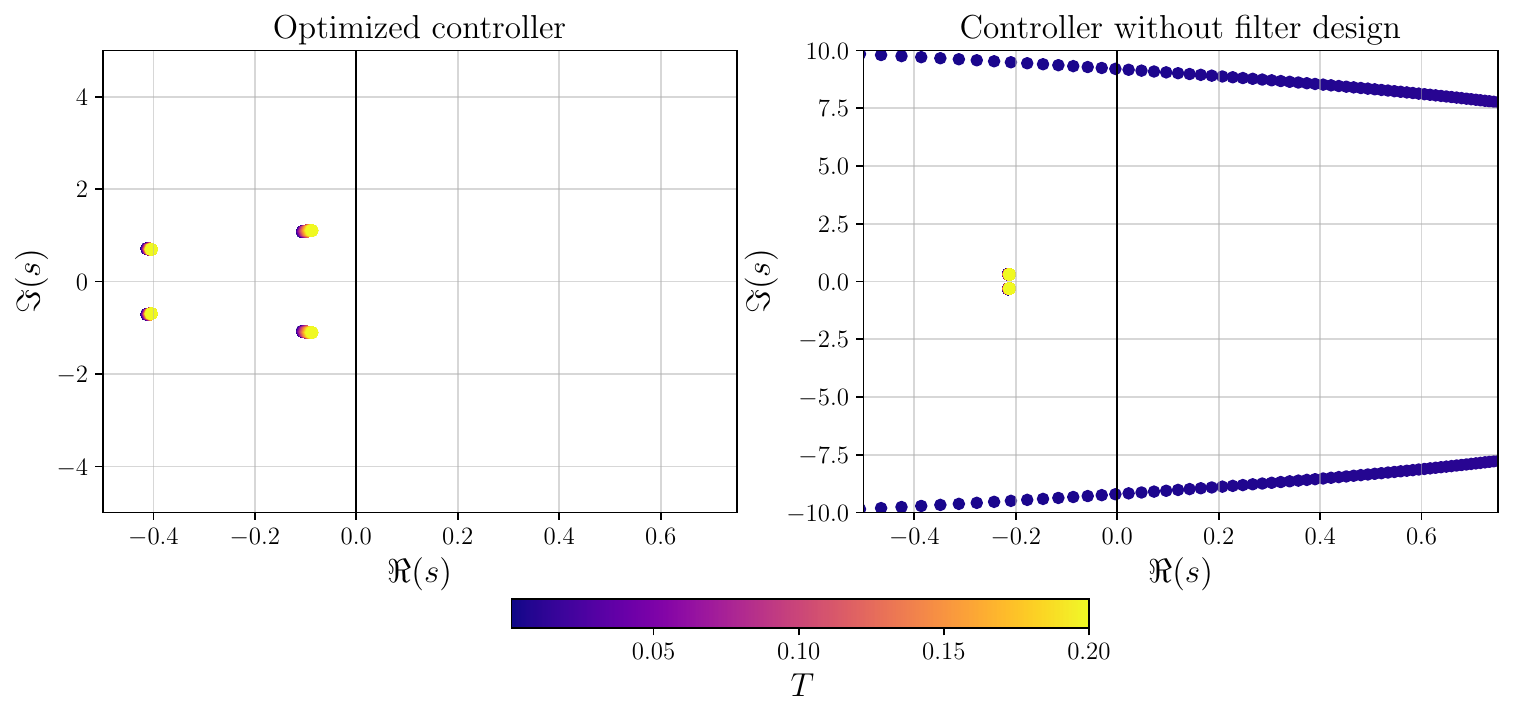}
    \caption{Evolution of the rightmost characteristic roots as the filter constant $T$ varies. The figure illustrates the sensitivity of the closed-loop spectrum to the filter dynamics.}
    \label{fig:roots_filter}
\end{figure}

\end{example}

The previous example highlights the importance of incorporating the filter constant into the control design. In the following example, we illustrate a practical application of this co-design approach. In particular, we have observed that the delay margin of the closed-loop system can be significantly increased through an appropriate choice of the design parameters. This naturally motivates the use of the proposed methodology to tune the controller gains and filter constant such that robustness is increased, rather than just minimizing the spectral abscissa.

\begin{example}
Delays are ubiquitous in control engineering systems, and enhancing the delay margin of a closed-loop system is therefore a key objective in controller design. However, the explicit computation of the delay margin is, in general, a challenging task (see, e.g., \cite{Ma2018delay, Chen2021delay, Ramirez2021scalable}), and numerical methods are typically required in practice.

With this in mind, consider the following single-input single-output linear time-invariant system:
\begin{equation}
\Sigma :=
\left\{
    \begin{array}{l}
    \dot{x}(t) =
    \begin{bmatrix}
    0 & 1  \\
    1 & 0
    \end{bmatrix} x(t)
    +
    \begin{bmatrix}
    0 \\ 1
    \end{bmatrix} u(t-\tau_0), \\
    y(t) = [-2 \;\; 1] x(t),
    \end{array}
\right.\label{eq:example_illustrative}
\end{equation}
where $x(t)\in\mathbb{R}^2$, $u(t)\in\mathbb{R}$, and $\tau_0 \geq 0$ denotes a constant input delay.

In the delay-free case ($\tau_0 = 0$), the system cannot be stabilized using proportional feedback alone, nor by means of a proportional–integral (PI) controller. For this reason, we consider a proportional–derivative (PD) control law of the form
\[
\left\{
    \begin{array}{l}
    u(t) = k_p y(t) + k_d z(t), \\
    T \dot{z}(t) + z(t) = \dot{y}(t),
    \end{array}
\right.
\]
where $z(t)$ is a filtered approximation of the output derivative and $T>0$ is the filter time constant.

The use of a filtered derivative is essential, since an ideal derivative action would lead to a neutral-type closed-loop system. In the delay-free case, the closed-loop system is exponentially stable whenever
\[
    -2 < k_p < -\frac{1}{2}, \qquad -1 < k_d < \frac{k_p}{2}.
\]

When an input delay $\tau_0 > 0$ is present, the use of an ideal derivative would typically impose the additional stability constraint $|k_d| < 1$. This restriction, however, can be relaxed when the derivative action is implemented through the low-pass filter introduced above.

The objective is to minimize the spectral abscissa of the closed-loop system by selecting appropriate control gains and filter constant. Algorithm \ref{alg:filteredPID} requires stabilizing parameters as an initial condition. Consider, for instance, the choice $k_p = -1$, $k_d = -5/6$, and $T = 0.01$, with an admissible interval $T \in [0.001, 0.06]$. For these parameters, the system is stable and can tolerate a nonzero delay. The corresponding stability region in the $(T, \tau_0)$-plane is depicted in Figure \ref{fig:first_region_illustrative}.

\begin{figure}
    \centering
    \includegraphics[width=0.5\linewidth]{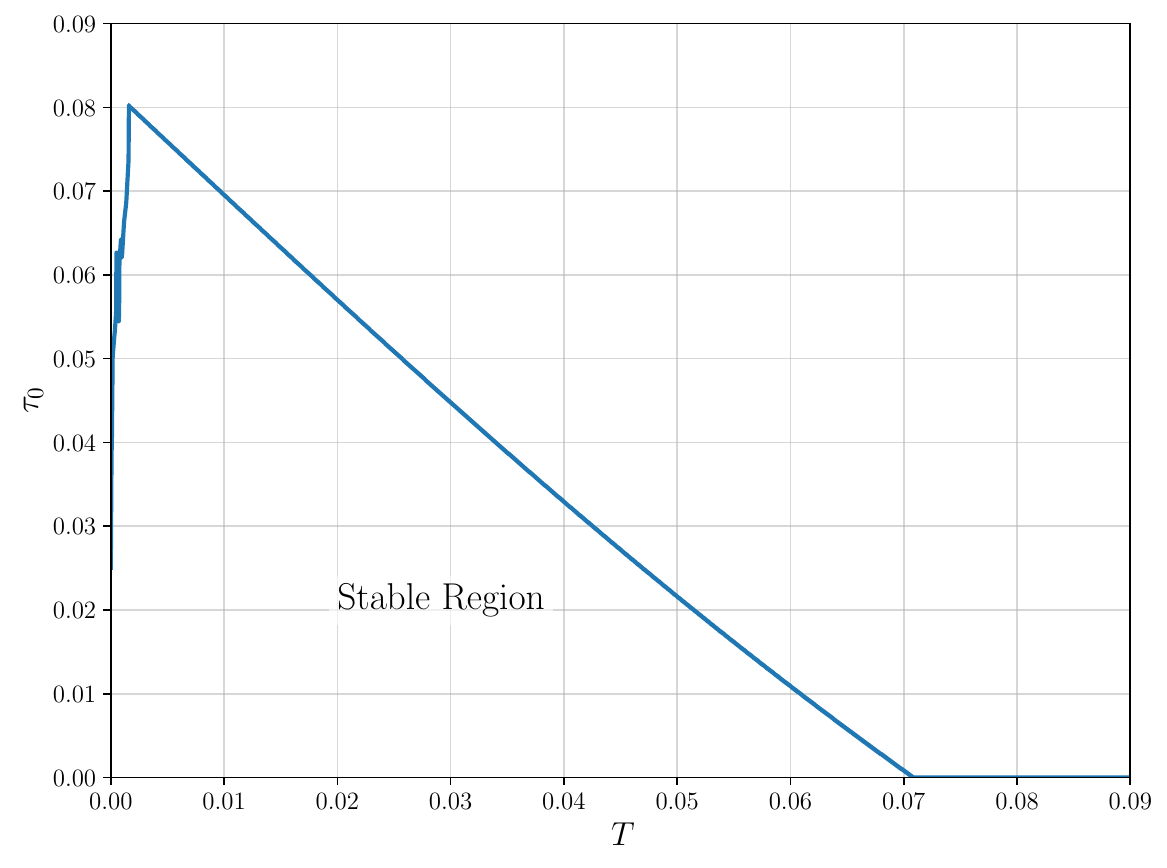}
    \caption{Stability region in the $(T, \tau_0)$-plane for the initial set of parameters.}
    \label{fig:first_region_illustrative}
\end{figure}

Applying Algorithm \ref{alg:filteredPID} with a BFGS scheme yields a local minimum at $k_p = -1.2311$, $k_d = -0.8927$, and $T = 0.0059$. The corresponding stability region is shown in Figure \ref{fig:second_region_illustrative}. Although the spectral abscissa is improved, the system becomes less robust: a smaller value of $T$ increases sensitivity to high-frequency dynamics. Moreover, the stability region shrinks, resulting in a more fragile closed-loop behavior.

\begin{figure}
    \centering
    \includegraphics[width=0.5\linewidth]{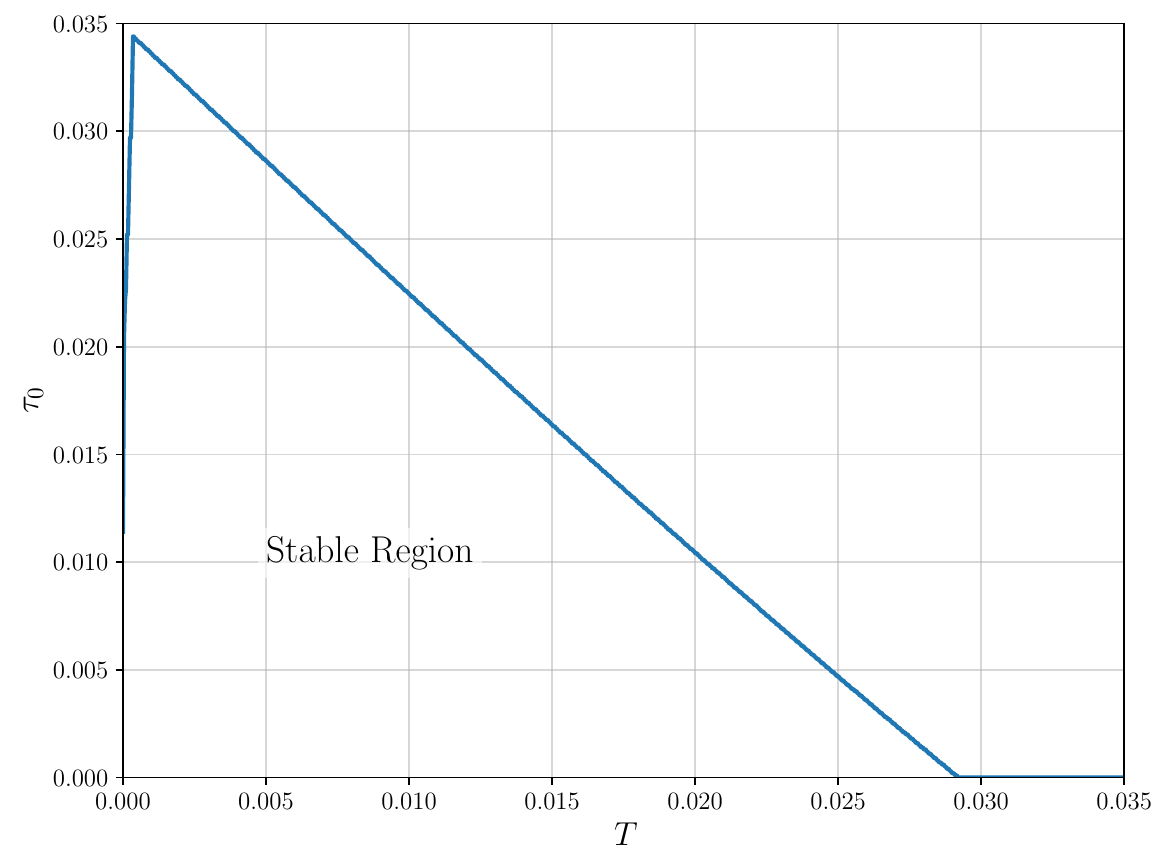}
    \caption{Stability region corresponding to the locally optimal parameters obtained from Algorithm \ref{alg:filteredPID}.}
    \label{fig:second_region_illustrative}
\end{figure}

Based on Figure \ref{fig:first_region_illustrative}, one may impose bounds on $T$ to improve robustness. In this spirit, we repeat Algorithm \ref{alg:filteredPID} with the same initial gains, but restrict $T \in [0.02, 0.04]$, using $T=0.02$ as the initial value. This choice aims to preserve a reasonable delay margin while reducing high-frequency sensitivity. The resulting local minimum is $k_p = -0.8184$, $k_d = -0.7237$, and $T = 0.0216$. The corresponding stability region, shown in Figure \ref{fig:third_region_illustrative}, is significantly larger. For these parameters, the system can tolerate delays up to approximately $\tau_0 \approx 0.13$, while also benefiting from improved high-frequency behavior due to the larger value of $T$.

\begin{figure}
    \centering
    \includegraphics[width=0.5\linewidth]{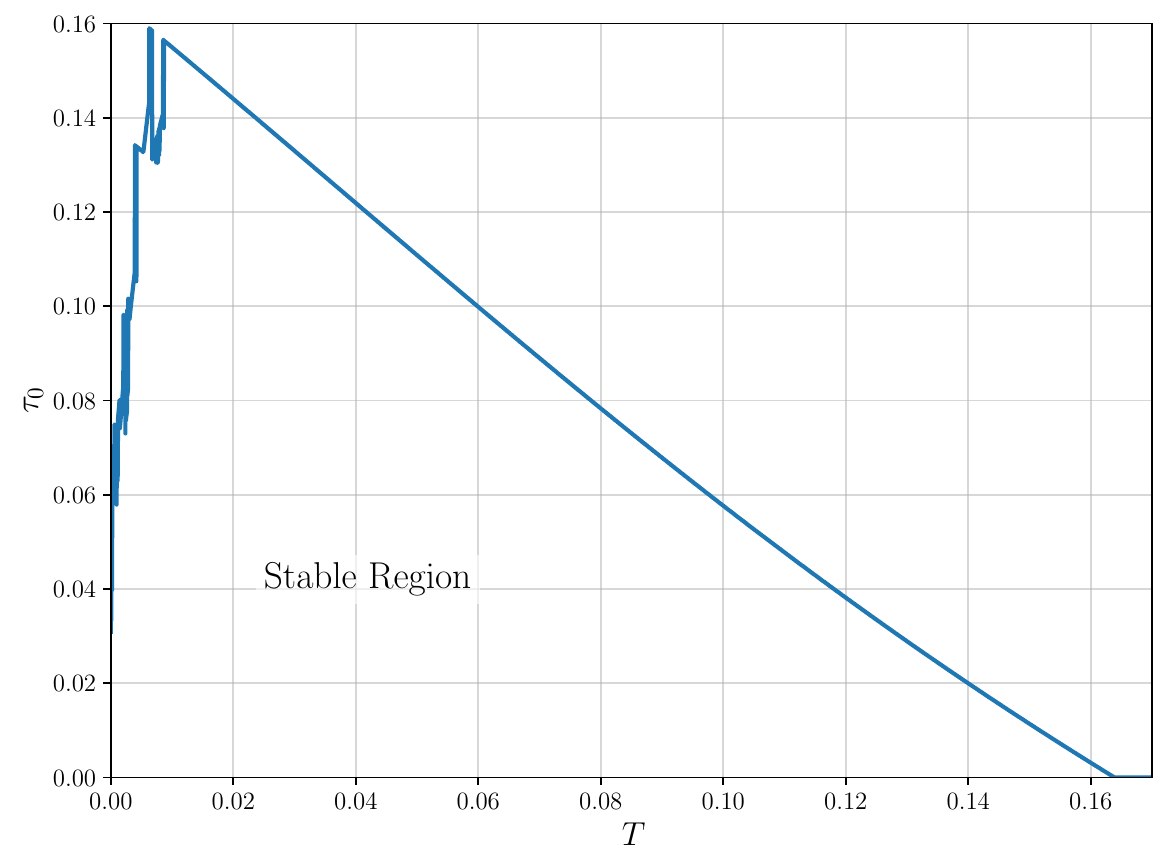}
    \caption{Stability region obtained after a second optimization with a restricted interval for $T$.}
    \label{fig:third_region_illustrative}
\end{figure}

In a final iteration, we consider $T \in [0.06, 0.1]$. This yields $k_p = -0.6618$, $k_d = -0.5999$, and $T = 0.0729$. As shown in Figure \ref{fig:fourth_region_illustrative}, the corresponding delay margin increases to approximately $\tau_{\text{max}} \approx 0.2$. 

\begin{figure}
    \centering
    \includegraphics[width=0.5\linewidth]{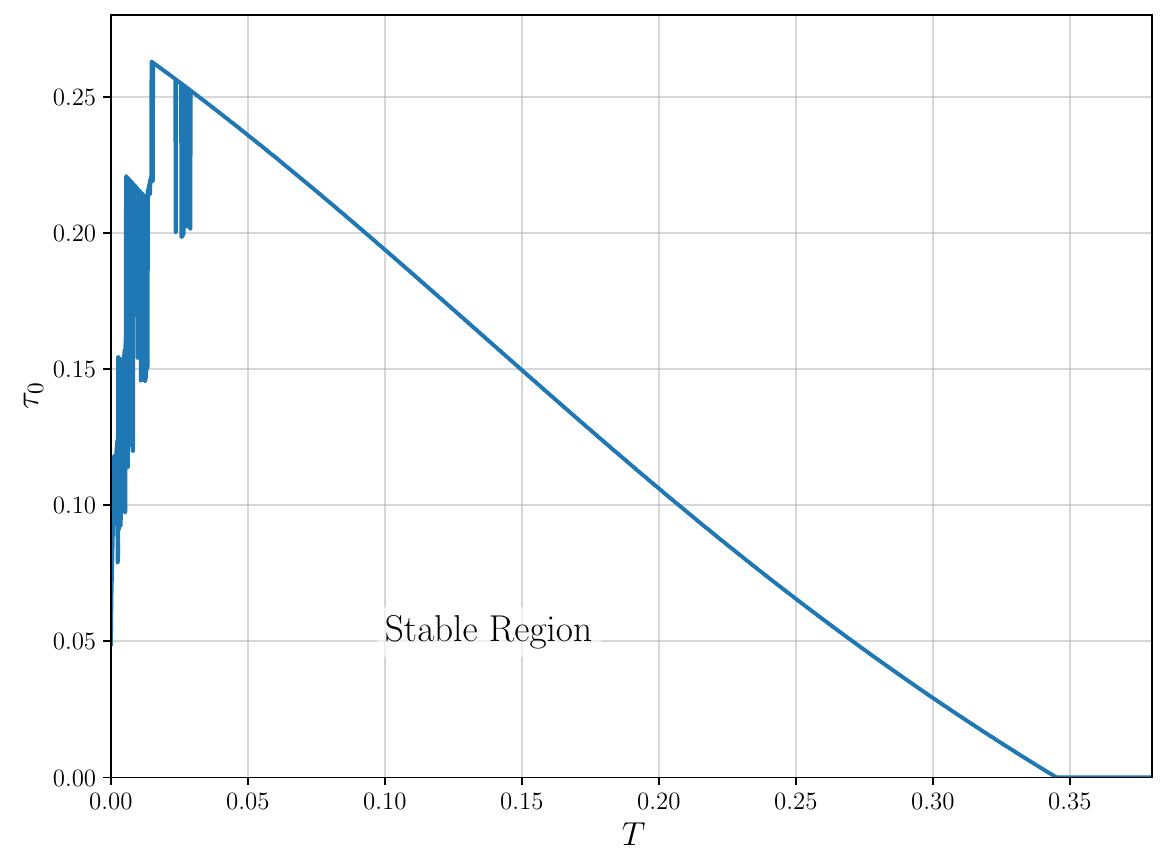}
    \caption{Stability region obtained after a third iteration of Algorithm \ref{alg:filteredPID}.}
    \label{fig:fourth_region_illustrative}
\end{figure}

We summarize the delay-free spectral abscissa and delay margins corresponding to the different controller gains in Table \ref{tab:results}. It is worth noting that, on the first set of \emph{optimized} gains, the delay margin is being highly influenced by the high-frequency roots that increase weight when $T$ goes to zero.

\begin{table}[h!]
\centering
\caption{Spectral abscissa and delay margin for each controller}
    \label{tab:results}
    \begin{tabular}{|c|c|c|c|}
    \hline
    \textbf{Control gains $(k_p, k_d)$} & \textbf{Filter constant $T$} & \textbf{Spectral abscissa $\rho$} & \textbf{Delay margin $\tau_\text{max}$} \\
    \hline
    % Example row:
    $(-1.2311, -0.8927)$ & $0.0059$ & $-3.57769$ & $0.0275$ \\
    \hline
    $(-0.8184, -0.7237)$ & $0.0216$ & $-1.46503$ & $0.1422$ \\
    \hline
    $(-0.6618, -0.5999)$ & $0.0729$ & $-1.23445$ & $0.217$ \\
    \hline
    % Add more rows here
\end{tabular}
\end{table}

This example illustrates how improved robustness can be achieved, while still reducing the spectral abscissa, through careful tuning of both the controller gains and the filter constant $T$. Although the focus here is on enhancing the delay margin, the same heuristic approach can be adapted to improve other robustness properties, depending on the application.
\end{example}

\section{Concluding remarks}

In this paper, we have addressed the design of filtered PID controllers for linear time-invariant systems with delays by explicitly accounting for the impact of the derivative filter on the closed-loop spectral properties. Motivated by the observation that classical PID design approaches based on ideal derivative action may lead to poor delay robustness or even loss of stability when filtering is introduced, we proposed a co-design framework in which the controller gains and the filter constant are optimized simultaneously.

The core contribution of the proposed framework is the formulation of a spectral-abscissa minimization problem subject to an explicit interval constraint on the filter constant. This problem constitutes the fundamental optimization step of the method and provides a systematic means of tuning the controller gains together with the filter parameter within a prescribed admissible range.

The numerical examples illustrates that the proposed strategy can significantly increase, for example, the achievable delay margin compared to classical approaches, at the expense of a moderate reduction in nominal convergence rate. This trade-off is intrinsic to delay-robust control design and highlights the importance of explicitly incorporating robustness considerations at the optimization stage.

\bibliographystyle{abbrv}
\bibliography{refs}
\end{document}